\theoremstyle{plain}
\theoremstyle{definition}
\newtheorem{proposition}{Proposition}
\newtheorem{corollary}{Corollary}
\definecolor{Blue}{rgb}{0,0,1}
\definecolor{Orange}{rgb}{1,0.5,0}
\definecolor{Green}{rgb}{0,1,0}
\title{Cheating in Ranking Systems}
\author{Lihi Dery\thanks{ Ariel University, Israel, lihid@ariel.ac.il} \and Dror Hermel \thanks{ Ryerson University, Canada, and Ariel University, Israel, drorhe@ariel.ac.il} \and Artyom Jelnov\thanks{ Ariel University, Israel, artyomj@ariel.ac.il}}
\begin{document}
\maketitle

\begin{abstract}
Consider an application (e.g., an \textit{app}) sold on an on-line platform (e.g., \textit{Google Play}), with the app paying a commission fee and, henceforth, offered for sale on the platform. The ability to sell the application depends on its customer ranking. Therefore, developers may have an incentive to promote their application’s ranking in a dishonest manner. One way to do this is by faking positive customer reviews. However, the platform is able to detect dishonest behavior (cheating) with some probability, and then proceeds to decide whether to ban the application. We provide an analysis, and find the equilibrium behaviors of both the application’s developers (cheat or not) and the platform (setting of the commission fee). We provide initial insights into how the platform’s detection accuracy affects the incentives of the app developers.
\end{abstract}

\textbf{Acknowledgment.} We wish to thank Christopher Thomas Ryan, Yair Tauman, Richard Zeckhauser and the anonymous reviewers for their helpful suggestions.

\section{Introduction}

Various systems allow users to rate items.  Using these ratings, the systems are then able to present a ranked list of items. Strategic agents may attempt to manipulate these ranked recommendations in order to increase their personal utility. However, these manipulations are costly. Furthermore, such manipulation attempts can be identified by inspection, which are also costly.

Consider, for example, an application (an app) and an on-line platform (e.g., the Apple App store) . The app may buy fake ratings which translate into a higher ranking on the App Store – a measure many users look for when searching for a new app to download. The negative impact generated in this scenario is multifold. From the end user side, a monetary investment - or rather loss - is made to purchase an app that does not generate much positive utility. Collectively, from the developer side, those who cheat gain profits in the short run, while for the others there is a short term monetary loss (as the apps they develop are ranked lower).

For the platform there is reputation loss (which can be associated with lost revenues), as users may be overly cautious before downloading new apps. Therefore, it is customary for the platform to use some mechanism to detect and remove cheating apps.

In this paper, we develop a model that studies the interaction between a platform and an application. The platform collects a fee from applications that want to use it. We study two cases: when the fee is exogenously provided and when the platform sets the fee. We show that from the application’s side, for an imperfect cheating detection technology, cheating will take place. We analyze how the quality of the detection algorithm and app rankings affect the incentive to cheat. Furthermore, we analyze the platform’s decision as to what commission fee to impose.

To the best of our knowledge, this is the first attempt to apply game theory into ranking systems in practice.

We begin with some background (section \ref{sec:related}). We then present our theoretical model (section \ref{sec:model}) and conclude with a discussion of our main findings and future research directions (section \ref{sec:discussion}). 

\section{Background}\label{sec:related}

We begin by surveying related work (section \ref{rw}), proceed to provide some intuition as to why rankings are significant enough so that people are willing to manipulate ratings in order to receive a high ranking (section \ref{rankings_matter}), and then survey methods for manipulation detection (section \ref{md}). 

\subsection{Related Works}\label{rw}
We suggest a new focal point to address cheating in ranking systems - an approach related to the well-established inspection games literature (cf. \citealp{avenhaus2002inspection} for a survey). The substantial difference that we implement is that while in inspection games one of the players decides whether to commit some violation, and another player, namely, the inspector, decides whether to perform a costly test to detect this violation, in our model a noisy alert of the violation is sent automatically. 

The notion of an automatically sent signal based on the action of one of the players appears in the literature in different contexts: industrial espionage (\citealp{barrachina2014entry}), international conflicts (\citealp{jelnov2017attacking}), and sports (\citealp{berentsen2002economics,kirstein2014doping}) to name a few.
In our case, we identify the favorable and adverse effects resulting in case the platform attempts to deter cheaters using an imperfect detection mechanism. 

Our paper is related to the economic law enforcement literature, which goes back to \citet{becker1968crime} and is surveyed in \citet{polinsky2007theory}. In our setting, we have an enforcer (the platform) and a potential violator (the application). We study a specific kind of violation: cheating in reviewer ratings. In our setting, this violation depends on the initial rating.  Moreover, the enforcer and the potential violator may have a common interest, because the application pays a commission fee to the platform. 

The work of \citet{darby1973free} resembles our topic in the sense that in their paper    a violation is wrong information given by a service supplier to a consumer. They study the existence of this kind of violation in a free market, and discuss how government intervention can reduce it. \citet{darby1973free} do not model strategic behavior by the government. In our paper, a platform, not a government, enforces honest behavior on an application, and we incorporate strategic considerations of the platform. 

The literature on economics of tort law, which can be traced to \citet{landes1984tort}, relates to our paper as the application may cause  damage to the platform. However, the tort law literature discusses how to cause one party to take care and prevent accidents which damage another party. Compensation for damage is the most common tool in tort law. In our case, no compensation is paid to the platform.

\subsection{Why Rankings Matter}\label{rankings_matter}
It has been shown that a website’s rank, not just its relevance, strongly and significantly affects the likelihood of a click (\citealp{glick2014does}).

As of March 2017, Google had 2.8 million apps available through its Android platform, and Apple had 2.2 million apps available on its iTunes App Store.\footnote{\textit{Statista: The Statistics} Portal https://www.statista.com/statistics/276623/number-of-apps-available-in-leading-app-stores/} With such massive numbers, users interested in discovering apps rely on rated listings, known as “top charts” such as “top free games”, “top free apps”, etc. Furthermore, a study by \citet{carare2012impact} indicates that users are willing to pay $\$4.50$ or more for an app that is top ranked as compared to the same app that is unranked, as people in general tend to disproportionately select products that are ranked at the top (\citealp{smith2001consumer,cabral2016box}). We further emphasize the monetary effects of higher ranking of apps by referring the reader to \citet{lee2014determinants}, who claim that one of the keys to a successful app is top rank status. 

Reviews play a critical role in online commerce (\citealp{mauri2013web}). For example, hotel reviews on websites that customers perceive as credible influence purchase behaviors (see \citealp{casalo2015online}). 
\citet{mayzlin2014promotional} show that competing products can self-promote by faking positive reviews for themselves, or negative reviews for competitors. 
For analysis of the impact of reputation in e-commerce, see also \citet{resnick2002trust}, who show that positive reviews of previous online transactions can predict good transactions in the future. Thus, the monetary gain from a highly ranked app is an incentive for app developers to boost their app rankings on the charts. In competing over reputation and higher ranking, product managers might be tempted to engage in manipulative behavior (\citealp{gossling2016manager}).

Unfortunately, in the app development context, some app developers choose to do so in a deceptive manner by paying fraudulent ranking services. If not addressed, these deceptions are harmful to the app platform – for developers, users and platform owners. For the developers, fraudulent ranking leads to unfair competition, which might discourage honest developers. The users might be led to install misbehaving or malicious apps, or they might be dissatisfied with platform app ratings and stop trusting the app charts. Finally, the platform’s reputation might be compromised.

\subsection{Manipulation Detection}\label{md}
Detection of deceptions is a top-priority for platforms, and is performed by detecting suspicious app patterns, user patterns, or both. A closely related line of work focuses on malware detection (\citealp{burguera2011crowdroid,narudin2016evaluation,seneviratne2017spam}). However, our focus is on ranking fraud, where the manipulating app is not necessarily malware.

Manipulating apps exhibit a different review pattern when compared with honest apps. Some algorithms for fake review detection focus on textual analysis of fake reviews, finding certain language constructs that are often used in fake reviews (e.g., \citealp{ott2011finding,hu2012manipulation,banerjee2017don}), while others (\citealp{schuckert2016insights}) point out the contradictions between overall (e.g. a hotel) and detailed rating of the same product (e.g.  specifics such as cleaning or location), which can expose fraudulent ratings.

Manipulated app rankings are likely to generate drastic ranking increases or decreases in a short time, or show strong rating deviations (see \citealp{zhu2013ranking}). \citet{heydari2016detection} show that the time interval in which the rating is given is also a measure of review trustworthiness. The detection can be based solely on ratings, by analyzing rating shifts, under the assumption that most ratings are honest (\citealp{akoglu2013opinion,savage2015detection}). In addition, manipulative users have a different review pattern. As the cost of setting up a valid account is quite high (e.g., to rate an app on Google Play requires a Google account, a mobile phone registered with the account, and the installed app), manipulators reuse their account and rate many apps over a short time frame. Manipulators may rate up to $500$ apps a day, rating them all with $5$ stars.

Manipulative users are usually part of a well-organized crowdsourcing system that performs malicious tasks. These are nicknamed “crowdturfing” systems, and their unique features have been mapped (see \citealp{wang2012serf}).
A recent study by \citet{Chen:2017} focused on identifying app clusters that are co-promoted by collusive attackers. The identification is based on unusual changes in rating patterns, measuring feature similarity in apps and applying machine learning techniques.
\citet{ye2015discovering} show that network information can also be employed. The baseline assumption is that an honest set of reviews for a product (or app) is formed by independent reviewer actions with various levels of activities and reviews. Therefore, a non-manipulative app should have reviews with various levels of network centrality. Furthermore, correlated review activities can be combined with linguistic and behavioral review signals (\citealp{rahman2017search}). 

In the next section, we present a formal model for the interaction of two agents, an application and a platform. Our model assumes that one (or all) of the above manipulation detection capabilities are available to the platform.

\section{Model}\label{sec:model}

We consider two models: one with an exogenous fee and one with an endogenous fee i.e., a fee set by the platform. 

\subsection{Preliminaries}

We consider two risk-neutral agents: an application ($A$) and a platform ($P$). 
We focus only on cases when the application decides to enter the platform. 

A rating for each application is calculated periodically. The rating represents the opinion of the users and is observed both by the application and the platform. Upon entrance, at stage $t_0$ each application receives an initial rating of $r_0 = 0$. At stages $t_1, t_2$, the application obtains a rating of $r_1, r_2 \in [0,1]$, respectively. For simplicity, we denote $r_1$ as $r$.
We hereby study the last stage ($t_2$).

Naturally, for the application, a higher rating results in higher visibility on the platform translating into more profits. 
We assume the profit is proportional to the application's current rating $r_2$, minus a commission fee $f$  ($f \geq 0$) payed to the platform. 
Thus the application is left with a revenue of $\gamma r_2(1-f)$, where $\gamma >0$ is the proportion coefficient. 

In order to increase the rating $r_2$, the application may decide to cheat ($c$) (e.g., by adding fake ratings). 
If the application does not cheat ($\hat{c}$) it still has a probability  $l(r)$ to obtain the highest rating $r_2 = 1$. However, with probability $1-l(r)$ the rating is $r_2 = r$.
The probability $l(r)$ ($0 < l(r) < 1$) increases in $r$, namely, the higher the rating $r$, the more probable it is that the application will reach $r_2 = 1$. 

The platform has some imperfect algorithm, that enables it to detect applications that might be cheating (see Section \ref{md} for more details on such algorithms). 
Indeed, no algorithm or technology is 100\% error free, and the used algorithm might overlook some cheating applications as well as label honest applications as cheaters. 

At stage $t_2$,  a rating of $r_2 = 1$ triggers an automatic noisy alert sent to the platform. The alert $s$ means that the application is suspected of cheating ($\hat{s}$ means the opposite alert).
When the platform receives $s$ it is required to choose whether to ban ($b$) or not ban ($\hat{b}$) the application. The ban decision is equivalent to setting the application's rating to $r_2 = 0$. Note that this implies a different penalty cost for different applications; an application with a higher rating at stage $t_1$ has more to lose from a ban than an application with a lower rating. 

Let $\alpha(r)$ be a type-I probability error, namely, the probability that $s$ is sent when $A$ does not cheat, and let $\beta(r)$ be the probability of a type-II error, namely, the probability that $s$ is not sent when $A$ cheats. 

We consider $\alpha$ and $\beta$ as commonly known. 
When a platform chooses a cheating detection algorithm, as part of the acceptance testing performed when integrating it, the algorithm is tested on different scenarios (where the results are known), and $\alpha$ and $\beta$ can then be estimated. In a similar manner, application developers can estimate these parameters. 

We assume that $\beta(r)$ weakly increases in $r$. Namely, a high rating $r$ that increases to $r_2=1$, is less detectable than a low rating that increases to $r_2=1$.

The platform's utility consists of three factors:
the revenue from the commission fee that the application pays ($\gamma r_2f$),
the cost of non-detection (denoted $w$), and
the cost of false accusation (denoted $v$). 
The two latter costs can be interpreted as a loss of the platform's reputation, which translates into loss of user confidence in the platform, leading to a decrease in purchases and thus in the platform's revenues. 

Consequently, if the application cheats and is not banned, $P$'s utility is  $\gamma r_2f-w$, $w>0$. However, if the application does not cheat and is not banned, $P$ obtains $\gamma r_2f+v$, $v>0$.  If the application is banned, the platforms revenue is $0$. 
The game and player utilities are defined in Figure \ref{g_descr}. 

\begin{figure}[h]
\centering
\includegraphics[scale=0.5]{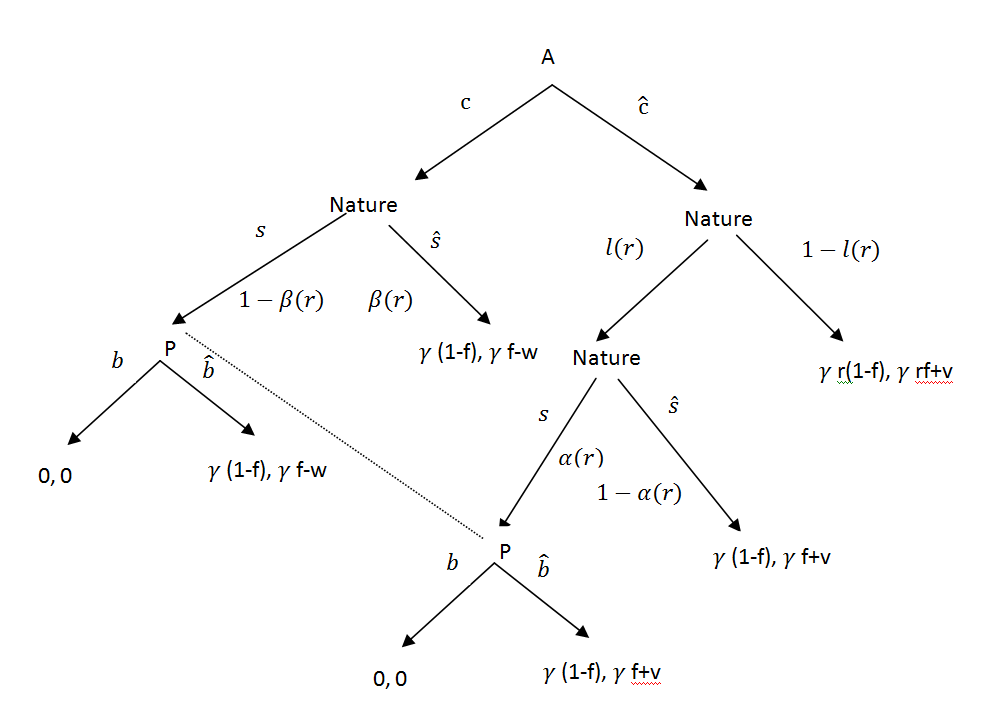}
\caption{Description of the game. Each pair of utilities represents application and platform utilities, respectively. }
\label{g_descr}
\end{figure}

If $r=1$, the highest rating is guaranteed to the application. Trivially, in this case there is an equilibrium where the application does not cheat, and the platform does not ban it. We assume hereafter $r<1$.
\subsection{Exogenous fee}\label{exogenous}
We now proceed and describe our results for an exogenous fee. 
The platform receives a signal $s$. If $\alpha(r) = 0$ (i.e., there is no possibility for a mistake regarding the signal), the platform bans the application. In the other cases: 
\begin{enumerate}
    \item If the platform's revenue from the commission fee is higher than the cost of non-detection (i.e. the reputation loss) then it will not ban the application even if it suspects cheating. Consequently, the application will surely cheat. 
    \item If $\beta(r)$ is high then the application is encouraged to cheat, since it is likely that cheating will not be detected. 
    \item If none of the former occurs, then the application will cheat with some positive probability, but not with certainty. 
\end{enumerate}

Formally, the following proposition characterizes an equilibrium of the game. 

\begin{proposition} \label{equilibrium}
Let $\alpha(r) > 0$. 
\begin{enumerate}
\item If $w<\gamma f$, then in the unique equilibrium of $G$ the application cheats with certainty, and with certainty, the platform does not ban the application.  
\item If $l(r)-\alpha(r)l(r)+r-rl(r)<\beta(r)$, then in the unique equilibrium of $G$ the application cheats with certainty, and following an alert $s$, the platform bans with certainty.
\item If $w>\gamma f$ and $l(r)-\alpha(r)l(r)+r-rl(r)>\beta(r)$, then the application  cheats with a probability $P_c$, $0<P_c<1$, and following an alert $s$ the platform bans with probability $P_b$, $0<P_b<1$. 
\end{enumerate}
\end{proposition}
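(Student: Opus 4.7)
The plan is to write both players' expected payoffs as functions of the application's cheating probability $P_c$ and the platform's banning probability $P_b$ (conditional on observing $s$). Setting $L := l(r)(1-\alpha(r)) + (1-l(r))r$, direct accounting gives the application's expected utility from cheating as $\gamma(1-f)[1 - P_b(1-\beta(r))]$ and from honesty as $\gamma(1-f)[l(r)(1-\alpha(r)P_b) + (1-l(r))r]$. For the platform, upon receiving $s$, banning yields $0$ while not banning yields $\gamma f + v - (w+v)q$, where $q$ is the posterior probability of cheating given $s$; by Bayes' rule, $q = (1-\beta(r))P_c / [(1-\beta(r))P_c + \alpha(r)l(r)(1-P_c)]$.

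For part (1), $w<\gamma f$ makes $\gamma f + v - (w+v)q > 0$ for every $q\in[0,1]$, so the platform strictly prefers not to ban; given $P_b=0$, cheating strictly dominates honesty because $l(r) + (1-l(r))r < 1$ whenever $l(r),r<1$. For part (2), the cheat-minus-honesty payoff gap is linear in $P_b$, equal to $\gamma(1-f)(1-l(r))(1-r)>0$ at $P_b=0$ and $\gamma(1-f)[\beta(r)-L]>0$ at $P_b=1$ whenever $\beta(r)>L$; hence cheating strictly dominates for every $P_b$, the application cheats with certainty, $q$ collapses to $1$, and (under the implicit $w>\gamma f$, since otherwise part (1) governs) banning is the platform's strict best response. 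For part (3), I first rule out each of the four pure profiles: (cheat, ban) fails application-optimality since $\beta(r)<L$ makes honesty strictly preferable at $P_b=1$; (cheat, don't ban) fails platform-optimality since $\gamma f - w < 0$; (honest, ban) fails platform-optimality because $P_c=0$ gives posterior $0$, so not banning yields $\gamma f + v > 0$; (honest, don't ban) fails application-optimality by the argument for part (1). Any equilibrium is thus fully mixed; the two indifference conditions pin down
\[ P_b \;=\; \frac{(1-r)(1-l(r))}{1-\beta(r)-l(r)\alpha(r)}, \qquad q \;=\; \frac{\gamma f + v}{w+v}, \]
and $P_c$ is recovered from $q$ by inverting the Bayes formula.

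The main technical step is verifying $P_b,P_c\in(0,1)$ under the hypotheses of part (3). The algebraic identity $1-\beta(r)-l(r)\alpha(r) - (1-r)(1-l(r)) = L - \beta(r)$ does double duty for $P_b$: it shows the denominator is strictly positive (since $L>\beta(r)$ yields $1-\beta(r)-l(r)\alpha(r) > (1-r)(1-l(r)) > 0$) and rewrites $P_b<1$ as $L>\beta(r)$, while $P_b>0$ is immediate. For $P_c$, the assumption $w>\gamma f$ gives $q\in(0,1)$, after which strict positivity of every term in Bayes' formula yields $P_c\in(0,1)$. Uniqueness then follows from strict dominance in parts (1) and (2), and from the strict indifference equations uniquely determining $(P_c,P_b)$ in part (3).
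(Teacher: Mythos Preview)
Your proof is correct and follows essentially the same route as the paper's: compute the application's and platform's payoffs, use Bayes' rule for the posterior $q=P(c\mid s)$, and solve the two indifference conditions to obtain the same formulas for $P_b$ and $P_c$. If anything, your argument is more careful than the paper's---you explicitly rule out the pure profiles in part~(3), note the implicit $w>\gamma f$ needed in part~(2), and use the clean identity $1-\beta(r)-l(r)\alpha(r)-(1-r)(1-l(r))=L-\beta(r)$ to verify $P_b\in(0,1)$, whereas the paper simply asserts the bounds.
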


All proofs appears in the appendix.

Note, that for  $w=\gamma f$ or $l(r)-\alpha(r)l(r)+r-rl(r)=\beta(r)$ the equilibrium may not be unique.

We now focus on part (3), where cheating occurs with some probability $0<P_c<1$. In this case, an increase in $\alpha(r)$, $\beta(r)$, $v$ and $f$ and a decrease in $w$ results in more cheating. 

The application compensates for an increase in the commission fee ($f$) by cheating, which may lead to a higher rating and thus an increase in profit. 
An increase in the false accusation cost ($v$) means that the platform is more reluctant to ban an application, which also leads to more cheating. 

An increase in the probability of false accusation $\alpha(r)$ means that even an honest application is more likely to be (mistakenly) banned for cheating, thus the application has less incentive to act honestly. Consequently, the probability the platform will ban applications increases in $\alpha(r)$.
An increase in the probability of non-detection of cheating $\beta(r)$ as well as a decrease in non-detection cost ($w$)  will lead to more cheating. Formally, and as a direct result of part 3 in Proposition~\ref{equilibrium}:

\begin{corollary}\label{behavior}
Let $w>\gamma f$ and $l(r)-\alpha(r)l(r)+r-rl(r)>\beta(r)$. Then in the equilibrium of $G$:
\begin{enumerate}
\item The probability that the application cheats increases in: $f$, $v$, $\alpha(r)$, and $\beta(r)$, and decreases in $w$.
\item The probability that $P$ bans $A$, following alert $s$, increases in $\beta(r)$ and in $\alpha(r)$. 
\end{enumerate}
\end{corollary}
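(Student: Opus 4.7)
The plan is to obtain closed-form expressions for the equilibrium randomization probabilities $P_c$ and $P_b$ by solving the two indifference equations that characterize the mixed equilibrium guaranteed by Proposition~\ref{equilibrium}(3), and then read the comparative statics off directly from those expressions.

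First, I would write the application's indifference condition between $c$ and $\hat{c}$ and solve for $P_b$. The calculation uses that cheating forces $r_2 = 1$ (so the signal misses only with probability $\beta(r)$), while honesty produces $r_2 = 1$ only with probability $l(r)$ and, conditional on that event, triggers a false positive with probability $\alpha(r)$. A short algebraic manipulation gives
\[ P_b \;=\; \frac{(1-r)\bigl(1-l(r)\bigr)}{1 - \beta(r) - l(r)\alpha(r)}. \]
Next, I would write the platform's indifference condition between $b$ and $\hat{b}$ after the signal $s$. Since banning yields $0$ while not banning yields $\gamma f + v - (v+w)\,\mathbb{P}(c\mid s)$, indifference forces the posterior to equal $(\gamma f + v)/(v+w)$; combining this with Bayes' rule (using $P(s\mid c)=1-\beta(r)$ and $P(s\mid\hat{c})=l(r)\alpha(r)$) yields
\[ P_c \;=\; \frac{(\gamma f + v)\,l(r)\alpha(r)}{(w-\gamma f)\bigl(1-\beta(r)\bigr) + (\gamma f + v)\,l(r)\alpha(r)}. \]
The two regime hypotheses in Proposition~\ref{equilibrium}(3) are precisely what make the respective denominators positive and both probabilities interior.

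Having these formulas, both parts of the corollary follow by inspection. Writing $P_c = A/(A+B)$ with $A=(\gamma f+v)l(r)\alpha(r)$ and $B=(w-\gamma f)(1-\beta(r))$, $P_c$ rises whenever $A$ rises or $B$ falls; so $P_c$ is increasing in $f$ (which raises $A$ and lowers $B$), in $v$ (raises $A$), in $\alpha(r)$ (raises $A$), and in $\beta(r)$ (lowers $B$), and decreasing in $w$ (raises $B$). For $P_b$, the parameters $\alpha(r)$ and $\beta(r)$ appear only with a negative sign in the denominator while the numerator is a positive constant in them, so each strictly raises $P_b$.

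I do not expect a genuine obstacle: once the two indifference equations are written down, everything reduces to algebra and sign inspection. The only point requiring care is the conditioning implicit in the definition of $\alpha(r)$: since the alert mechanism fires only when $r_2 = 1$, the unconditional probability of a false signal under $\hat{c}$ is $l(r)\alpha(r)$, not $\alpha(r)$. This convention is already baked into the threshold $l(r)-\alpha(r)l(r)+r-rl(r)>\beta(r)$ in Proposition~\ref{equilibrium}, so adopting it here keeps the regime hypotheses and the derived formulas mutually consistent.
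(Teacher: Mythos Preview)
Your proposal is correct and matches the paper's own approach exactly: the paper's proof simply cites the closed-form expressions for $P_c$ and $P_b$ (your two displayed formulas are identical to the paper's equations \eqref{Pc} and \eqref{Pb}, derived in the proof of Proposition~\ref{equilibrium}) and says the comparative statics follow directly. Your additional $A/(A+B)$ bookkeeping just makes explicit the sign inspection that the paper leaves to the reader.
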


When $\alpha(r)$ is independent of $r$, and cheating occurs with some probability $0<P_c<1$, then as the application approaches a rating of $1$, the more it is likely to cheat. The intuition behind this is that since applications with a rating  close to $1$ are less suspected of cheating, and hence less likely to be detected, they can thus cheat more freely.

\begin{corollary}\label{increasesr}
Suppose $\alpha(r)$ is constant, $\alpha(r) \equiv \alpha$. Let $w>fa$ and $l(r)-\alpha(r)l(r)+r-rl(r)>\beta(r)$. Then in the equilibrium of $G$ the probability that $A$ cheats increases in $r$.
\end{corollary}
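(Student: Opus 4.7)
The plan is to use the platform's mixing indifference condition to pin down $P_c$ as an explicit function of $r$ and then read off monotonicity from the resulting formula.

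First, I would verify that the corollary's hypotheses place us in case (3) of Proposition~\ref{equilibrium}, so that both players genuinely mix; let $P_c$ denote the cheating probability in equilibrium. Next, I would compute the joint probabilities of each state (cheat / honest) with the alert $s$. Because the alert mechanism only fires when $r_2 = 1$, the honest branch picks up an extra factor $l(r)$: specifically,
\[
P(\text{cheat}, s) = P_c\,(1 - \beta(r)), \qquad P(\text{honest}, s) = (1 - P_c)\, l(r)\, \alpha.
\]
The platform's payoff from banning is $0$, while the expected payoff from not banning after $s$ is $P(\text{cheat}\mid s)(\gamma f - w) + P(\text{honest}\mid s)(\gamma f + v)$. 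Equating these (after clearing the common denominator $P(s)$) and using $w > \gamma f$ gives
\[
\frac{P_c}{1 - P_c} \;=\; \frac{\alpha\,(\gamma f + v)}{w - \gamma f}\cdot\frac{l(r)}{1 - \beta(r)}.
\]

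Now, since the map $x \mapsto x/(1+x)$ is strictly increasing on $[0,\infty)$, it suffices to show that the right-hand side is strictly increasing in $r$. The leading coefficient $\alpha\,(\gamma f + v)/(w - \gamma f)$ is a positive constant by the corollary's hypotheses; $l(r)$ is strictly increasing in $r$ by assumption; and $1 - \beta(r)$ is weakly decreasing in $r$ because $\beta$ is assumed weakly increasing. Hence $l(r)/(1 - \beta(r))$ is strictly increasing in $r$, and therefore so is $P_c$.

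The only real subtlety is setting up the Bayesian update correctly. One must retain the factor $l(r)$ on the honest branch — that factor encodes the fact that the alert fires only when $r_2 = 1$ in the first place, and it is precisely what supplies the extra $r$-dependence that powers the monotonicity claim beyond what $\beta(r)$ alone already contributes via Corollary~\ref{behavior}.
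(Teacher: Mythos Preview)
Your proof is correct and follows essentially the same route as the paper: both use the platform's indifference condition to obtain the closed form $P_c = \alpha l(r)(\gamma f+v)\big/\big[\alpha l(r)(\gamma f+v)+(1-\beta(r))(w-\gamma f)\big]$ (equivalently your odds expression $P_c/(1-P_c)$), and then read off monotonicity in $r$ from $l'(r)>0$ and $\beta'(r)\geq 0$. The only difference is cosmetic---you re-derive the formula in situ, whereas the paper simply cites the equation already established in the proof of Proposition~\ref{equilibrium}.
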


Corollary \ref{increasesr}, that cheating increases in $r$, is not very surprising for a strictly increasing $\beta(r)$ since this means that less alerts $s$ are sent when $r$ is close to $1$. However, what is somewhat astounding is the fact that this claim holds even for those cases where $\beta(r)$ is constant, meaning - that even when the probability of the alert is constant, cheating increases in $r$. 

\subsection{Endogenous fee }
In section \ref{exogenous} the fee $f$ is exogenous. However, in reality the fee is set by the platform. Consider a model where at an initial stage $t_0$ the platform chooses a fee $0\leq f\leq1$, and then the game proceeds as defined in Figure \ref{g_descr}.\footnote{The platform maximizes its expected utility. The technicalities are characterized in proposition \ref{EUP} in the Appendix.}

An increase in the commission fee affects the platform’s utility in two ways: (1) positive – an increase in revenue and (2) negative – an increase in cheating, which reduces the platform’s utility, a direct consequence of Corollary~\ref{behavior}. 
 
As Figure \ref{res1} illustrates, the platform may maximize its expected utility for a fee lower than $1$, and the platform collects a higher fee when the alert it obtains is more precise (a lower $\beta(r)$). 

These results are not general. As can be seen in Figure \ref{res5}, when the cost of non-detection is set to $w=4$ instead of $w=3$, the platform maximizes its expected utility for a fee of $f=1$.

\begin{figure}[H]
\centering
\includegraphics[scale=0.6]{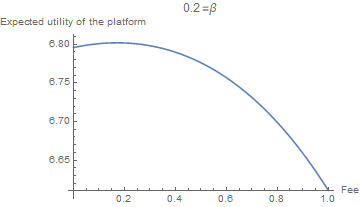}
\includegraphics[scale=0.6]{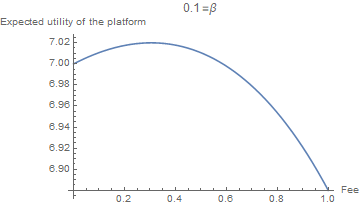}
\includegraphics[scale=0.6]{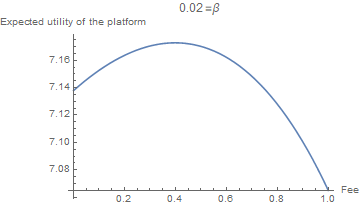}
\caption{Result for various $\beta(r)$ with $\gamma=1$, $r=0.6$, $\alpha(r)=0.1$, $l(r)=0.6$, $v=9$, $w=3$}
\label{res1}
\end{figure}


\begin{figure}[H]
\centering
\includegraphics[scale=0.6]{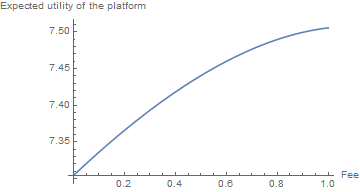}
\caption{Result for $w=4$ with $\gamma=1$, $r=0.6$, $\alpha(r)=0.1$, $\beta(r)=0.1$, $l(r)=0.6$, $v=9$}
\label{res5}
\end{figure}

\section{Conclusions}\label{sec:discussion}
In this paper, we provide a novel stylized framework to study the interaction between an app sales platform (e.g., Apple's app store or Google Play) and an app developer who may be tempted to cheat in order to increase their app ranking. Our framework captures some of this interesting interaction, and the consequential equilibrium analysis gives rise to some important implications.

Our most significant finding is that a higher fee leads to more cheating. Consequently, even a monopolistic platform may choose not to impose a high fee. 


Furthermore, we found that precise alert signals decrease cheating; when the cheating detection algorithm is a good one (i.e. $\alpha$ and $\beta$ are low) less cheating occurs. 
Thus, we conclude that if the platform has a good manipulation detection algorithm then it should make this known (i.e. publicize its $\alpha$ and $\beta$), since the application developers will refrain from cheating if they know that there is a high chance they will be caught.  

Numerically,  we show that a lower non-detection rate allows the platform to request a higher fee. This result gives the platform yet another reason to invest in acquiring or developing a good manipulation detection algorithm.


We focused on the commission fee the product has to pay the platform and considered other costs, such as promotion costs and the cost of creating fake reviews as negligible. 
We assumed that the cost of a fake review is sufficiently low, and the reward from an undetected cheat is high, so there is incentive to cheat. 
Note that platforms have different tools to protect against manipulations and make it more difficult to create a fake review. For example, they can require reviewers to use the verification system CAPTCHA$^{\copyright}$, or to verify that a reviewer really consumes the product. Still, sophisticated manipulators can bypass these barriers. 

Utilities in our model are exogenous. We assume that the platform is interested in its reputation, and that cheating harms the platform's reputation. In a future extended model, platform competition can be considered, where more than one platform competes for customers and customers abandon a platform if they are dissatisfied with its cheating prevention level (for platform competition in a non-cheating environment see, for example, \citealp{halaburda2016role}).

Most importantly, we provide initial insights into how the platform detection accuracy affects the incentives of the app developers. Understanding these interactions, and the resulting equilibria, provide ample foundations to address future points of interest. For example, it can be used to better understand how to put in place mechanisms that align incentives and provides a benchmark framework for future empirical work.

Our findings and conclusion are relevant to other types of e-commerce as well, and can be of interest in any scenario that involves a product that faces a fee to be rated and ranked on an online platform. Other examples of such systems may include online vendor sites such  as Amazon, eBay, and hotel bookings sites. 

\bibliographystyle{apa}

\begin{thebibliography}{}

\bibitem[\protect\astroncite{Akoglu et~al.}{2013}]{akoglu2013opinion}
Akoglu, L., Chandy, R., and Faloutsos, C. (2013).
\newblock Opinion fraud detection in online reviews by network effects.
\newblock {\em Proceedings of the Seventh International AAAI Conference on
  Weblogs and Social Media}.

\bibitem[\protect\astroncite{Avenhaus et~al.}{2002}]{avenhaus2002inspection}
Avenhaus, R., Von~Stengel, B., and Zamir, S. (2002).
\newblock Inspection games.
\newblock {\em R. J. Aumann and S. Hart (eds), Handbook of game theory with
  economic applications}, 3:1947--1987, North--Holland, Amsterdam.

\bibitem[\protect\astroncite{Banerjee et~al.}{2017}]{banerjee2017don}
Banerjee, S., Chua, A.~Y., and Kim, J.-J. (2017).
\newblock Don't be deceived: Using linguistic analysis to learn how to discern
  online review authenticity.
\newblock {\em Journal of the Association for Information Science and
  Technology}.

\bibitem[\protect\astroncite{Barrachina et~al.}{2014}]{barrachina2014entry}
Barrachina, A., Tauman, Y., and Urbano, A. (2014).
\newblock Entry and espionage with noisy signals.
\newblock {\em Games and economic behavior}, 83:127--146.

\bibitem[\protect\astroncite{Becker}{1968}]{becker1968crime}
Becker, G.~S. (1968).
\newblock Crime and punishment: An economic approach.
\newblock In {\em The economic dimensions of crime}, pages 13--68. Palgrave
  Macmillan, London.

\bibitem[\protect\astroncite{Berentsen}{2002}]{berentsen2002economics}
Berentsen, A. (2002).
\newblock The economics of doping.
\newblock {\em European Journal of Political Economy}, 18(1):109--127.

\bibitem[\protect\astroncite{Burguera et~al.}{2011}]{burguera2011crowdroid}
Burguera, I., Zurutuza, U., and Nadjm-Tehrani, S. (2011).
\newblock Crowdroid: behavior-based malware detection system for android.
\newblock In {\em Proceedings of the 1st ACM workshop on Security and privacy
  in smartphones and mobile devices}, pages 15--26. ACM.

\bibitem[\protect\astroncite{Cabral and Natividad}{2016}]{cabral2016box}
Cabral, L. and Natividad, G. (2016).
\newblock Box-office demand: The importance of being\# 1.
\newblock {\em The Journal of Industrial Economics}, 64(2):277--294.

\bibitem[\protect\astroncite{Carare}{2012}]{carare2012impact}
Carare, O. (2012).
\newblock The impact of bestseller rank on demand: Evidence from the app
  market.
\newblock {\em International Economic Review}, 53(3):717--742.

\bibitem[\protect\astroncite{Casalo et~al.}{2015}]{casalo2015online}
Casalo, L.~V., Flavian, C., Guinaliu, M., and Ekinci, Y. (2015).
\newblock Do online hotel rating schemes influence booking behaviors?
\newblock {\em International Journal of Hospitality Management}, 49:28--36.

\bibitem[\protect\astroncite{Chen et~al.}{2017}]{Chen:2017}
Chen, H., He, D., Zhu, S., and Yang, J. (2017).
\newblock Toward detecting collusive ranking manipulation attackers in mobile
  app markets.
\newblock In {\em Proceedings of the 2017 ACM on Asia Conference on Computer
  and Communications Security}, ASIA CCS '17, pages 58--70, New York, NY, USA.
  ACM.

\bibitem[\protect\astroncite{Darby and Karni}{1973}]{darby1973free}
Darby, M.~R. and Karni, E. (1973).
\newblock Free competition and the optimal amount of fraud.
\newblock {\em The Journal of law and economics}, 16(1):67--88.

\bibitem[\protect\astroncite{Glick et~al.}{2014}]{glick2014does}
Glick, M., Richards, G., Sapozhnikov, M., and Seabright, P. (2014).
\newblock How does ranking affect user choice in online search?
\newblock {\em Review of Industrial Organization}, 45(2):99--119.

\bibitem[\protect\astroncite{G{\"o}ssling et~al.}{2016}]{gossling2016manager}
G{\"o}ssling, S., Hall, C.~M., and Andersson, A.-C. (2016).
\newblock The manager's dilemma: a conceptualization of online review
  manipulation strategies.
\newblock {\em Current Issues in Tourism}, pages 1--20.

\bibitem[\protect\astroncite{Ha{\l}aburda and
  Yehezkel}{2016}]{halaburda2016role}
Ha{\l}aburda, H. and Yehezkel, Y. (2016).
\newblock The role of coordination bias in platform competition.
\newblock {\em Journal of Economics \& Management Strategy}, 25(2):274--312.

\bibitem[\protect\astroncite{Heydari et~al.}{2016}]{heydari2016detection}
Heydari, A., Tavakoli, M., and Salim, N. (2016).
\newblock Detection of fake opinions using time series.
\newblock {\em Expert Systems with Applications}, 58:83--92.

\bibitem[\protect\astroncite{Hu et~al.}{2012}]{hu2012manipulation}
Hu, N., Bose, I., Koh, N.~S., and Liu, L. (2012).
\newblock Manipulation of online reviews: An analysis of ratings, readability,
  and sentiments.
\newblock {\em Decision Support Systems}, 52(3):674--684.

\bibitem[\protect\astroncite{Jelnov et~al.}{2017}]{jelnov2017attacking}
Jelnov, A., Tauman, Y., and Zeckhauser, R. (2017).
\newblock Attacking the unknown weapons of a potential bomb builder: The impact
  of intelligence on the strategic interaction.
\newblock {\em Games and Economic Behavior}, 104:177--189.

\bibitem[\protect\astroncite{Kirstein}{2014}]{kirstein2014doping}
Kirstein, R. (2014).
\newblock Doping, the inspection game, and {B}ayesian enforcement.
\newblock {\em Journal of Sports Economics}, 15(4):385--409.

\bibitem[\protect\astroncite{Landes and Posner}{1984}]{landes1984tort}
Landes, W.~M. and Posner, R.~A. (1984).
\newblock Tort law as a regulatory regime for catastrophic personal injuries.
\newblock {\em The Journal of Legal Studies}, 13(3):417--434.

\bibitem[\protect\astroncite{Lee and Raghu}{2014}]{lee2014determinants}
Lee, G. and Raghu, T.~S. (2014).
\newblock Determinants of mobile apps' success: evidence from the {A}pp {S}tore
  market.
\newblock {\em Journal of Management Information Systems}, 31(2):133--170.

\bibitem[\protect\astroncite{Mauri and Minazzi}{2013}]{mauri2013web}
Mauri, A.~G. and Minazzi, R. (2013).
\newblock Web reviews influence on expectations and purchasing intentions of
  hotel potential customers.
\newblock {\em International Journal of Hospitality Management}, 34:99--107.

\bibitem[\protect\astroncite{Mayzlin et~al.}{2014}]{mayzlin2014promotional}
Mayzlin, D., Dover, Y., and Chevalier, J. (2014).
\newblock Promotional reviews: An empirical investigation of online review
  manipulation.
\newblock {\em The American Economic Review}, 104(8):2421--2455.

\bibitem[\protect\astroncite{Narudin et~al.}{2016}]{narudin2016evaluation}
Narudin, F.~A., Feizollah, A., Anuar, N.~B., and Gani, A. (2016).
\newblock Evaluation of machine learning classifiers for mobile malware
  detection.
\newblock {\em Soft Comput.}, 20(1):343--357.

\bibitem[\protect\astroncite{Ott et~al.}{2011}]{ott2011finding}
Ott, M., Choi, Y., Cardie, C., and Hancock, J.~T. (2011).
\newblock Finding deceptive opinion spam by any stretch of the imagination.
\newblock In {\em Proceedings of the 49th Annual Meeting of the Association for
  Computational Linguistics: Human Language Technologies-Volume 1}, pages
  309--319. Association for Computational Linguistics.

\bibitem[\protect\astroncite{Polinsky and Shavell}{2007}]{polinsky2007theory}
Polinsky, A.~M. and Shavell, S. (2007).
\newblock The theory of public enforcement of law.
\newblock {\em Handbook of law and economics}, 1:403--454.

\bibitem[\protect\astroncite{Rahman et~al.}{2017}]{rahman2017search}
Rahman, M., Rahman, M., Carbunar, B., and Chau, D.~H. (2017).
\newblock Search rank fraud and malware detection in {G}oogle {P}lay.
\newblock {\em IEEE Transactions on Knowledge and Data Engineering},
  29(6):1329--1342.

\bibitem[\protect\astroncite{Resnick and Zeckhauser}{2002}]{resnick2002trust}
Resnick, P. and Zeckhauser, R. (2002).
\newblock Trust among strangers in internet transactions: Empirical analysis of
  ebay's reputation system.
\newblock In {\em The Economics of the Internet and E-commerce}, pages
  127--157. Emerald Group Publishing Limited.

\bibitem[\protect\astroncite{Savage et~al.}{2015}]{savage2015detection}
Savage, D., Zhang, X., Yu, X., Chou, P., and Wang, Q. (2015).
\newblock Detection of opinion spam based on anomalous rating deviation.
\newblock {\em Expert Systems with Applications}, 42(22):8650--8657.

\bibitem[\protect\astroncite{Schuckert et~al.}{2016}]{schuckert2016insights}
Schuckert, M., Liu, X., and Law, R. (2016).
\newblock Insights into suspicious online ratings: direct evidence from
  tripadvisor.
\newblock {\em Asia Pacific Journal of Tourism Research}, 21(3):259--272.

\bibitem[\protect\astroncite{Seneviratne et~al.}{2017}]{seneviratne2017spam}
Seneviratne, S., Seneviratne, A., Kaafar, M.~A., Mahanti, A., and Mohapatra, P.
  (2017).
\newblock Spam mobile apps: Characteristics, detection, and in the wild
  analysis.
\newblock {\em ACM Transactions on the Web (TWEB)}, 11(1):4.

\bibitem[\protect\astroncite{Smith and Brynjolfsson}{2001}]{smith2001consumer}
Smith, M.~D. and Brynjolfsson, E. (2001).
\newblock Consumer decision-making at an internet shopbot: Brand still matters.
\newblock {\em The Journal of Industrial Economics}, 49(4):541--558.

\bibitem[\protect\astroncite{Wang et~al.}{2012}]{wang2012serf}
Wang, G., Wilson, C., Zhao, X., Zhu, Y., Mohanlal, M., Zheng, H., and Zhao,
  B.~Y. (2012).
\newblock Serf and turf: crowdturfing for fun and profit.
\newblock In {\em Proceedings of the 21st international conference on World
  Wide Web}, pages 679--688. ACM.

\bibitem[\protect\astroncite{Ye and Akoglu}{2015}]{ye2015discovering}
Ye, J. and Akoglu, L. (2015).
\newblock Discovering opinion spammer groups by network footprints.
\newblock In {\em ECML/PKDD (1)}, pages 267--282.

\bibitem[\protect\astroncite{Zhu et~al.}{2013}]{zhu2013ranking}
Zhu, H., Xiong, H., Ge, Y., and Chen, E. (2013).
\newblock Ranking fraud detection for mobile apps: A holistic view.
\newblock In {\em Proceedings of the 22nd ACM international conference on
  Information \& Knowledge Management}, pages 619--628. ACM.

\end{thebibliography}

\section*{Appendix}
\begin{proof}[Proof of proposition \ref{equilibrium}]
Consider first an equilibrium where $P$ chooses pure $\hat{b}$. Then $c$ is a superior action of $A$. By Figure \ref{g_descr}, the expected utility of $P$ in this case is $\gamma f-w$. If $P$ chooses $b$, its payoff is $0$, and the platform prefers $\hat{b}$  to $b$ for $w< \gamma f$.

Next, consider $P$, following $s$, chooses pure $b$. Observe, that for $\alpha(r)>0$, pure $\hat{c}$ is not an equilibrium in this case. By contrary, if $A$ chooses pure $\hat{c}$, with positive probability it obtains the rating $1$ and  a false signal $s$ is sent to $P$; therefore, $b$ is not the best reply of the platform.

If $A$ chooses $c$, and $P$, following $s$, chooses pure $b$, $A$'s expected utility is $\gamma (1-f)\beta(r)$. If $A$ does not cheat, and $P$, following $s$, chooses pure $b$, $A$'s expected utility is $\gamma (1-f)[r(1-l(r))+l(r)(1-\alpha(r))]$. Thus, $A$ prefers $c$ to $\hat{c}$ for $\beta(r)>r-rl(r)+l(r)-\alpha(r)l(r)$.

Let $A$ choosing $c$ with probability $P_c$. Given alert $s$, let $P(c|s)$ be belief of $P$  that $A$ cheats:
\begin{equation} \label{Pcs}
P(c|s)=\frac{P_c(1-\beta(r))}{P_c(1-\beta(r))+(1-P_c)l(r)\alpha(r)}.
\end{equation}

 Following rating $1$ of $A$ and alert $s$, $P$ is indifferent between $b$ and $\hat{b}$ for
\[\gamma f-wP(c|s)+v(1-P(c|s))=0, \]
and by \eqref{Pcs}, this is equivalent to
\begin{equation}\label{Pc}
P_c=\frac{\alpha(r)l(r)(\gamma f+v)}{\alpha(r)l(r)(\gamma f+v)+(1-\beta(r))(w-\gamma f)}.
\end{equation}
By \eqref{Pc}, $0<P_c<1$ for $w>\gamma f$.

Let $P_b$ be a probability with which $P$ bans the Application, following alert $s$. $A$ is indifferent between $c$ and $\hat{c}$ for
\[\gamma (1-f)[1-(1-\beta(r))P_b]=\gamma (1-f)[(1-l(r))r+l(r)(1-\alpha(r)P_b)],\]
namely,
\begin{equation}\label{Pb}
P_b=\frac{(1-r)(1-l(r))}{1-\beta(r)-\alpha(r)l(r)}.
\end{equation}
 By \eqref{Pb}, $0<P_b<1$ for $l(r)-\alpha(r)l(r)+r-rl(r)>\beta(r)$.
\end{proof}

\begin{proof}[Proof of Corollary \ref{behavior}]
Since conditions of part 3 of proposition \ref{equilibrium} hold, probabilities of cheating and of banning are given by \eqref{Pc} and \eqref{Pb}, respectively. Results follow directly by \eqref{Pc} and \eqref{Pb}.
\end{proof}

\begin{proof}[Proof of Corollary \ref{increasesr}]
Since conditions of part 3 of proposition \ref{equilibrium} hold, probabilities of cheating is given by \eqref{Pc}. The result follows directly by \eqref{Pc} and \and by $\frac{\partial l(r)}{\partial r}>0$ and $\frac{\partial \beta(r)}{\partial r}\geq 0$.
\end{proof}

\begin{proposition}\label{EUP}
\begin{enumerate}
\item If $w<\gamma f$, then in equilibrium the expected utility of $P$ is $\gamma f-w$.  
\item If $l(r)-\alpha(r)l(r)+r-rl(r)<\beta(r)$,  then in equilibrium the expected utility of $P$ is $\beta(r)(\gamma f-w)$.
\item If $w>\gamma f$ and $l(r)-\alpha(r)l(r)+r-rl(r)>\beta(r)$, then in equilibrium the expected utility of $P$ is $(1-P_c)[\gamma f[(1-l(r))r+l(r)]+v]+P_c(\gamma f-w)$, where $P_c$ is given by \eqref{Pc}. 
\end{enumerate}
\end{proposition}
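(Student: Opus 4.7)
The plan is to compute $P$'s expected utility directly in each of the three equilibrium regimes identified in Proposition \ref{equilibrium}, reading the relevant payoffs off Figure \ref{g_descr} and using the equilibrium strategies characterized there.

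In case (1), $w<\gamma f$, so $A$ cheats with certainty and $P$ never bans. Cheating forces $r_2=1$, and with no ban occurring the platform obtains $\gamma\cdot 1\cdot f - w = \gamma f - w$ directly from the cheating/no-ban leaf of the game tree. Case (2) is only slightly longer: $A$ again cheats with certainty so $r_2=1$, and conditional on the alert $s$ (which arrives with probability $1-\beta(r)$ by the definition of the type-II error) $P$ bans for a payoff of $0$; with the complementary probability $\beta(r)$ no alert is sent, no ban decision is triggered, and $P$ receives $\gamma f - w$. Averaging gives $\beta(r)(\gamma f - w)$.

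For case (3), the mixed equilibrium, the key observation is that the mixing probability $P_c$ in \eqref{Pc} was constructed precisely so that $P$ is indifferent between $b$ and $\hat{b}$ following alert $s$. Consequently $P$'s expected utility does not depend on the value of $P_b$, and we may evaluate it at the convenient choice $P_b=0$. Conditioning on $A$'s action then gives: with probability $P_c$, $A$ cheats, $r_2=1$, $P$ does not ban, and the payoff is $\gamma f - w$; with probability $1-P_c$, $A$ does not cheat, $r_2$ equals $1$ with probability $l(r)$ and $r$ otherwise, and $P$ (again not banning) receives $\gamma r_2 f + v$, giving a conditional expectation of $\gamma f[(1-l(r))r + l(r)] + v$. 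Combining the two branches produces the claimed formula.

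The only subtle step is the ``set $P_b=0$'' reduction in case (3); the main obstacle would be a reader who expects a brute-force calculation that tracks $P_b$ through all four combinations of cheating and alert. I would flag explicitly that this tracking is unnecessary because the equilibrium indifference condition for $P$ following $s$ is exactly the equation used to solve for $P_c$ in the proof of Proposition \ref{equilibrium}, so any $P_b\in[0,1]$ gives the same expected utility and the $P_b=0$ choice is purely a computational convenience.
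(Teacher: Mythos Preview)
Your proposal is correct and follows the paper's own approach, which is stated in a single line (``Directly from Proposition~\ref{equilibrium} and Figure~\ref{g_descr}''); you have simply filled in the details the paper omits. The $P_b=0$ shortcut you use in case~(3) is valid and is a clean way to carry out the computation the paper leaves implicit.
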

\begin{proof}
Directly from Proposition \ref{equilibrium} and Figure \ref{g_descr}.
\end{proof}
\end{document}